\documentclass[letterpaper,journal]{IEEEtran}

\usepackage{amsmath,amssymb,amsfonts}
\usepackage{bm}
\usepackage{cite}
\usepackage{graphicx}
\usepackage[caption=false,font=footnotesize]{subfig}
\usepackage{textcomp}
\usepackage{url}
\usepackage{booktabs}
\usepackage{array}
\usepackage{algorithm}
\usepackage{algorithmic}
\usepackage{xcolor}
\usepackage{colortbl}
\usepackage{tikz}
\usepackage[colorlinks,citecolor=blue,urlcolor=blue,linkcolor=blue]{hyperref}

\newtheorem{theorem}{Theorem}

\newcommand{\R}{\mathbb{R}}
\newcommand{\T}{^{\mathsf T}}
\newcommand{\dd}{\mathrm d}
\newcommand{\norm}[1]{\left\lVert #1 \right\rVert}
\newcommand{\abs}[1]{\left\lvert #1 \right\rvert}
\newcommand{\atan}{\operatorname{atan}}
\newcommand{\atanTwo}{\operatorname{atan2}}

\definecolor{lime}{HTML}{A6CE39}
\definecolor{ourmethodblue}{RGB}{226,239,255}
\DeclareRobustCommand{\orcidicon}{
\begin{tikzpicture}
\draw[lime, fill=lime] (0,0)
circle[radius=0.16]
node[white]{{\fontfamily{qag}\selectfont \tiny \.{I}D}}; 
\end{tikzpicture}
\hspace{-2mm}
}
\foreach \x in {A, ..., Z}{%
\expandafter\xdef\csname orcid\x\endcsname{\noexpand\href{https://orcid.org/\csname orcidauthor\x\endcsname}{\noexpand\orcidicon}}
}

\begin{document}

\title{Parameter-Robust Sensorless Control of IPMSM Drives With Adaptive Flux Observer}

\author{
        
Fobao Zhou, \emph{Student Member, IEEE}, Zhenxiao Yin, \emph{Student Member, IEEE}, Xueyan Wang,
\\ Yang Shen, \emph{Student Member, IEEE}, Yuanfeng Qu, and Hang Zhao$^\ast$\hspace{-1.5mm}\orcidF{}, \emph{Member, IEEE}

\thanks{This work is supported by the National Natural Science Foundation of China (No. 52407066), Guangdong Science and Technology Program (2025A0505020029), and Youth S\&T Talent Support Program of Guangdong Provincial Association for Science and Technology (SKXRC2025464). (\textit{Corresponding author: Hang Zhao.)}}
\thanks{The authors are with the Robotics and Autonomous Systems Thrust, The
Hong Kong University of Science and Technology (Guangzhou), Guangzhou 511453, China
(e-mail: hangzhao@hkust-gz.edu.cn).}}

\markboth{JOURNAL OF \LaTeX\ CLASS FILES,~Vol.~xx, No.~xx, 2026}%
{Zhou \MakeLowercase{\textit{et al.}}: Parameter-Free Sensorless Control of IPMSM Drives}

\maketitle

\begin{abstract}
To address parameter sensitivity commonly in interior permanent magnet synchronous
motor (IPMSM) sensorless control, this paper proposes a parameter-robust
control framework by extending the adaptive flux observer from surface-mounted
PMSM to salient-pole machines. First, parameter
mismatches are interpreted as an equivalent flux
vector with $d$-axis and $q$-axis components. Permanent magnet flux,
$d$-axis inductance, and resistance mismatches are
mapped to the $d$-axis flux component, whereas $q$-axis inductance mismatch
is mapped to the $q$-axis component. Then, a scalar adaptive flux
update is designed to compensate the $d$-axis flux variation by tracking the
equivalent flux magnitude. Since the $q$-axis component rotates the observed
flux direction and cannot be eliminated by scalar adaptation, a high-frequency
$q$-axis voltage injection method is introduced to identify the $q$-axis
inductance. Meanwhile, stability analysis proves bounded equivalent flux
magnitude tracking and identifies the residual tangential flux responsible for
position error. Finally, experimental results show that the proposed method does not
require precise resistance, $d$-axis inductance, or flux, while the $q$-axis
inductance is provided by online identification.
\end{abstract}

\begin{IEEEkeywords} interior permanent magnet synchronous
motor (IPMSM), nonlinear flux observer, sensorless
control.
\end{IEEEkeywords}

\section{Introduction}
\IEEEPARstart{I}{nterior} permanent magnet synchronous motor (IPMSM) drives are
widely used in transportation, servo systems, robotics, and actuators because
of their high torque density, efficiency, and fast current response. In
field-oriented control, rotor electrical position is required to align stator
current with the rotor magnetic field. Replacing a mechanical position sensor
with a sensorless algorithm reduces cost and improves compactness and
reliability, but makes the controller dependent on the electromagnetic model
reconstructed from voltage and current measurements.

IPMSM sensorless techniques are generally divided into high-frequency signal
injection (HFI) and fundamental frequency model-based methods
\cite{WangReview2020}. HFI exploits rotor saliency and is mainly used in zero- and low-speed. Fundamental frequency methods estimate the back electromotive force (EMF) or
flux from measured voltages and currents, and are therefore preferred in
the medium- and high-speed regions. Representative implementations include
Kalman filter \cite{11197560}, extended state observer (ESO) \cite{Zuo2025ESO}, sliding-mode observer
(SMO) \cite{Wu2024FOSMO}, and nonlinear flux observer \cite{Lee2010}.

The accuracy of a fundamental frequency observer usually depends on the motor
parameters used in its model. The resistance varies with winding
temperature, the permanent magnet flux changes with temperature and aging, and
the $d$- and $q$-axis inductances vary with magnetic saturation,
cross-saturation, and load current. These variations distort the estimated
back-EMF or flux vector and consequently produce position error
\cite{Zhou2023Robust,Woldegiorgis2023,Wang2025ESOFlux}. Online parameter
identification can reduce this sensitivity, but simultaneous estimation of
multiple electrical parameters is constrained by identifiability and generally
requires additional estimators and excitation conditions
\cite{Liu2022SOESO,Woldegiorgis2023}. Therefore, improving robustness through
the observer structure remains an important solution.

Recently, several model-free and parameter-robust sensorless studies have been
reported for surface-mounted PMSM (SPMSM), reducing commissioning
dependence by reconstructing the electrical behavior from online data or an
ultralocal model \cite{Davari2026,Chen2025}. However, parameter-robust sensorless control for
IPMSMs remains more difficult, because saliency introduces unequal $d$- and
$q$-axis inductances whose mismatches affect the position-dependent flux vector
in different directions rather than as a single scalar inductance error.

Existing parameter-robust IPMSM designs still mainly
rely on SMO and ESO. SMO-based controllers improve robustness by using switching correction to suppress bounded
model uncertainties and disturbances. In \cite{Yin2022FSMO}, a full-order SMO
with a variable gain was used to reduce chattering and harmonic position
errors. A super-twisting flux SMO was proposed in \cite{Wu2025STFSMO} to
improve finite-time convergence, and adaptive or high-order SMO structures were
further combined with phase-locked loop or quadrature-signal generators in
\cite{Wang2025SuperTwisting,Wu2024FOSMO}. However, SMO
methods still face the usual tradeoff between chattering suppression, phase
delay, and gain tuning.

ESO-based methods take another route by treating unknown model terms and load
disturbances as extended states for online estimation and compensation. A
linear ESO was used in \cite{Qu2020ELADRC} to estimate the back EMF and
internal disturbances, while generalized-integrator and third-order ESOs were
combined in \cite{Zuo2025ESO} for back-EMF and position-disturbance
estimation. An adaptive harmonic-filtering ESO was also introduced in an
improved ADRC framework to suppress parameter mismatch and current harmonics
\cite{Yang2026ILADRC}. Related ESO-based active-flux
compensation and SMO--ESO schemes further improved robustness against
parameter and load disturbances \cite{Wang2025ESOFlux,Sun2024Robust}.
Nevertheless, these methods usually introduce several observer bandwidths,
filter parameters, or compensation gains. More importantly, parameter errors
are commonly lumped into a disturbance state, rather than separated according
to their physical effects on flux magnitude and direction.

To address the parameter sensitivity of IPMSM sensorless control, this paper
aims to develop a structurally simple and easily tuned sensorless algorithm.
The starting point of this work is the nonlinear flux observer for SPMSM
\cite{Lee2010}, which estimates rotor position through a geometric flux circle
constraint. A related flux observer was extended to IPMSM using a
position-dependent inductance matrix \cite{Khlaief2012}. In our previous
SPMSM work, the parameter mismatch problem was addressed by mapping
resistance, inductance, and flux errors to the flux side and compensating them
through an adaptive equivalent flux update \cite{zhou2026equivalent}. This method
removed the steady-state effects of resistance and flux mismatches, but
remained sensitive to inductance mismatch. Since resistance and flux mismatch
had been handled in the SPMSM case, a natural question is whether this
adaptive flux idea can be extended to IPMSMs for broader parameter mismatch
compensation and simultaneous inductance mismatch mitigation.

This paper answers this question by extending the adaptive flux principle to
IPMSM and establishing a parameter mismatch decomposition specific to
salient-pole machines. The
equivalent flux is separated into a normal component along the rotor $d$-axis
and a tangential component along the $q$-axis. Permanent magnet flux,
$d$-axis inductance, and the dominant resistance mismatches enter the normal
component and can therefore be compensated by the scalar adaptive flux update.
In contrast, $q$-axis inductance mismatch, together with a smaller
resistance-induced term when $i_d\neq0$, enters the tangential component and
rotates the estimated flux direction. A simple high-frequency $q$-axis voltage
injection method is consequently used to identify $L_q$. The identified value
is updated at no or light load and frozen under load, preventing
saturation-induced inductance drift from being introduced into the observer. The main contributions of this paper are summarized as follows.
\begin{itemize}
\item A unified normal--tangential equivalent flux decomposition is derived for
IPMSM parameter mismatch. It shows that flux, $L_d$, and the dominant
resistance mismatches mainly change the normal flux component, whereas $L_q$
mismatch directly produces a tangential position error.
\item An adaptive flux observer is developed to compensate the
normal mismatch component, while high-frequency $q$-axis voltage injection is
used to identify the remaining sensitive parameter $L_q$. Together, the two
mechanisms yield parameter-robust sensorless operation without precise offline
motor parameters.
\item A uniformly ultimately bounded analysis proves equivalent flux magnitude
tracking and separates the scalar tracking error from the residual tangential
flux. This result explains both the compensation capability of adaptive flux
and the necessity of $L_q$ identification.
\item Experiments under load transitions, sudden and time-varying parameter
mismatches, and a wide speed range verify the estimation accuracy, parameter
robustness, and convergence of the proposed method.
\end{itemize}


\section{IPMSM Model and Nonlinear Flux Observer}

The IPMSM model is established and analyzed in the stationary $\alpha\beta$
frame. Let
$i=[i_\alpha,i_\beta]\T$ and
$v=[v_\alpha,v_\beta]\T$ denote the stator current and voltage vectors,
respectively. The rotor electrical position is denoted by
$\theta$, and two unit vectors are introduced as:
\begin{equation}
n(\theta)=
\begin{bmatrix}
\cos\theta\\
\sin\theta
\end{bmatrix},
\quad
q(\theta)=
\begin{bmatrix}
-\sin\theta\\
\cos\theta
\end{bmatrix}
\label{eq:nq_def}
\end{equation}
where $n(\theta)$ and $q(\theta)$ are the rotor $d$- and $q$-axis unit vectors.

From \eqref{eq:nq_def}, it follows that $n\T n=q\T q=1$, $n\T q=0$, $\dot n=\omega q$, and
$\dot q=-\omega n$, where $\omega=\dot\theta$ is the electrical angular
speed. The current can be decomposed as:
\begin{equation}
i=i_d n+i_q q,
\quad
i_d=n\T i,
\quad
i_q=q\T i 
\label{eq:current_dec}
\end{equation}
where $i_d$ and $i_q$ are the $d$- and $q$-axis currents, respectively.

For an IPMSM, the stator flux vector is:
\begin{equation}
x=\lambda_{\alpha\beta}
=\left(\psi_f+L_d i_d\right)n+L_q i_q q
\label{eq:ipmsm_flux_dq}
\end{equation}
where $x=\lambda_{\alpha\beta}=[\lambda_{\alpha},\lambda_{\beta}]\T$ is the flux vector in the stationary frame, $L_d$
and $L_q$ are the real $d$- and $q$-axis inductances, and $\psi_f$ is the real flux linkage. Equivalently,
\begin{equation}
x=L(\theta)i+\psi_f n
\label{eq:ipmsm_flux_matrix}
\end{equation}
where $L(\theta)$ is the position-dependent inductance matrix
defined by $L(\theta)=L_d nn\T+L_q qq\T$.

Expanding this inductance matrix yields:
\begin{equation}
L(\theta)=
\begin{bmatrix}
L_0+L_1\cos 2\theta & L_1\sin 2\theta\\
L_1\sin 2\theta & L_0-L_1\cos 2\theta
\end{bmatrix}
\label{eq:L_theta}
\end{equation}
where $L_0=(L_d+L_q)/2$ is the average inductance and
$L_1=(L_d-L_q)/2$ is the saliency dependent inductance amplitude.

The voltage equation yields the flux model:
\begin{equation}
\dot x=v-R_s i
\label{eq:flux_state}
\end{equation}
where $R_s$ is the stator resistance. If the motor parameters are known,
an IPMSM nonlinear flux observer can be written as:
\begin{equation}
\dot{\hat x}
=v-R_s i+\Gamma \hat\eta\left(\psi_f^2-\norm{\hat\eta}^2\right)
\label{eq:ideal_nfo}
\end{equation}
where $\hat x$ is the estimate of $x$ and $\Gamma$ is the observer gain.
The auxiliary flux vector $\hat\eta$ is defined as:
\begin{equation}
\hat\eta=\hat x-L(\hat\theta)i 
\label{eq:ideal_eta}
\end{equation}
where $\hat\theta$ is the estimated rotor electrical position. The position is
obtained from the auxiliary flux direction:
\begin{equation}
\hat\theta=\atanTwo(\hat\eta_\beta,\hat\eta_\alpha)
\label{eq:theta_hat}
\end{equation}
where $\hat\eta_\alpha$ and $\hat\eta_\beta$ are the two components of
$\hat\eta$, and $\atanTwo(\cdot,\cdot)$ is the four-quadrant inverse tangent.
Compared with the SPMSM observer, the essential difference is that the
inductive current term is no longer $L_s i$, where $L_s$ is the scalar SPMSM
stator inductance, but
$L(\hat\theta)i$. Therefore, the observer couples the estimated position with
both $L_d$ and $L_q$.

\section{Adaptive Flux Nonlinear Observer}

\subsection{Unified Equivalent Flux Decomposition}

The first step is to derive the equivalent flux vector observed under parameter
mismatch. Let $\hat R_s$, $\hat L_d$, $\hat L_q$, and $\hat\psi_f$ denote the
nominal resistance, $d$-axis inductance, $q$-axis inductance, and flux used in the controller. The mismatches are defined as the real
values minus their nominal counterparts:
\begin{equation}
\begin{aligned}
\Delta R_s&=R_s-\hat R_s, &
\Delta L_d&=L_d-\hat L_d\\
\Delta L_q&=L_q-\hat L_q, &
\Delta\psi_f&=\psi_f-\hat\psi_f 
\end{aligned}
\label{eq:mismatch_def}
\end{equation}
where $\Delta R_s$, $\Delta L_d$, $\Delta L_q$, and $\Delta\psi_f$ denote the
stator resistance, $d$-axis inductance, $q$-axis inductance, and flux
mismatches, respectively. Resistance mismatch enters the flux integrator when
the nominal resistance is used:
\begin{equation}
v-\hat R_s i=\dot x+\Delta R_s i 
\label{eq:input_mismatch}
\end{equation}

To obtain a quasi-steady decomposition, the fast nonlinear correction is
neglected and the flux bias caused by resistance mismatch is written as:
\begin{equation}
\Delta x_R = \int \Delta R_s i\,\dd t
\label{eq:delta_x_R_def}
\end{equation}
where $\Delta x_R$ is an equivalent flux bias rather than a new physical motor
flux. Using $i=i_d n+i_q q$, $\int n\,\dd t= -q/\omega$, and
$\int q\,\dd t= n/\omega$, one obtains:
\begin{equation}
\Delta x_R
=
\frac{\Delta R_s}{\omega}i_q n
-
\frac{\Delta R_s}{\omega}i_d q 
\label{eq:delta_x_R}
\end{equation}

The equivalent flux vector $\eta_{\mathrm{eq}}$ generated by parameter mismatch is defined as:
\begin{equation}
\eta_{\mathrm{eq}}
= x+\Delta x_R-\hat L(\theta)i 
\label{eq:eta_mismatch_start}
\end{equation}
where
$\hat L(\theta)=\hat L_d nn\T+\hat L_q qq\T$ is the nominal inductance matrix
evaluated at $\theta$ for the local mismatch decomposition.

Substituting \eqref{eq:ipmsm_flux_dq} and \eqref{eq:delta_x_R} into
\eqref{eq:eta_mismatch_start} yields:
\begin{equation}
\eta_{\mathrm{eq}}
=
\psi_{\mathrm{eq},n} n
+
\psi_{\mathrm{eq},q} q
\label{eq:eta_eq_dec}
\end{equation}
where the normal equivalent flux $\psi_{\mathrm{eq},n}$ component is:
\begin{equation}
\psi_{\mathrm{eq},n}
=\hat\psi_f+\Delta\psi_f+\Delta L_d i_d
+\frac{\Delta R_s}{\omega}i_q
\label{eq:psi_eq_n}
\end{equation}
and the tangential equivalent flux $\psi_{\mathrm{eq},q}$ component is:
\begin{equation}
\psi_{\mathrm{eq},q}
=\Delta L_q i_q
-\frac{\Delta R_s}{\omega}i_d 
\label{eq:psi_eq_q}
\end{equation}

Equation \eqref{eq:eta_eq_dec} gives the equivalent flux vector under parameter
mismatch, while \eqref{eq:psi_eq_n} and \eqref{eq:psi_eq_q} are its projections
onto the rotor $d$- and $q$-axis unit vectors:
\begin{equation}
\psi_{\mathrm{eq},n}=n\T\eta_{\mathrm{eq}},
\quad
\psi_{\mathrm{eq},q}=q\T\eta_{\mathrm{eq}} 
\label{eq:eq_projection}
\end{equation}

Equivalently, these two projections can be collected as the equivalent flux
coordinate vector:
\begin{equation}
\begin{aligned}
{\psi}_{\mathrm{eq}}
&=
\begin{bmatrix}
\psi_{\mathrm{eq},n} & \psi_{\mathrm{eq},q}
\end{bmatrix}\T\\
\Psi&=\norm{{\psi}_{\mathrm{eq}}}
=\sqrt{\psi_{\mathrm{eq},n}^2+\psi_{\mathrm{eq},q}^2}
\end{aligned}
\label{eq:psi_eq_vector}
\end{equation}
where ${\psi}_{\mathrm{eq}}$ is the $nq$-coordinate representation of
$\eta_{\mathrm{eq}}$, and $\Psi$ is its equivalent flux magnitude.

The subscripts $n$ and $q$ denote the normal and tangential projections along
$n(\theta)$ and $q(\theta)$, respectively. Thus, flux, $L_d$, and resistance
mismatches affect the equivalent flux length through $\psi_{\mathrm{eq},n}$,
whereas $L_q$ mismatch and the $i_d$-dependent part of resistance mismatch
rotate its direction through $\psi_{\mathrm{eq},q}$. This resistance-induced
rotation appears under MTPA operation with $i_d\neq0$, but disappears under
$i_d=0$ control.

\subsection{Adaptive Flux Observer Design}

The baseline observer in \eqref{eq:ideal_nfo} constrains $\norm{\hat\eta}$ to a
fixed flux radius. The above analysis shows that parameter mismatch produces the
theoretical equivalent vector $\eta_{\mathrm{eq}}$ in
\eqref{eq:eta_eq_dec}. However, $\eta_{\mathrm{eq}}$ cannot be evaluated online
because the real parameters and their mismatches are unknown. It is therefore
used only to analyze the mismatch mechanism, not as an observer input.

Instead, let $\hat x_{\mathrm{pm}}$ denote the parameter-robust observer state
and let $\eta_{\mathrm{pm}}$ denote its online computable auxiliary flux:
\begin{equation}
\begin{aligned}
\eta_{\mathrm{pm}}
&=\hat x_{\mathrm{pm}}-\hat L(\hat\theta_{\mathrm{pm}})i \\
\hat\theta_{\mathrm{pm}}
&=\atanTwo(\eta_{\mathrm{pm},\beta},\eta_{\mathrm{pm},\alpha})
\end{aligned}
\label{eq:adaptive_eta}
\end{equation}
where $\hat\theta_{\mathrm{pm}}$ is the estimated rotor position,
$\eta_{\mathrm{pm},\alpha}$ and $\eta_{\mathrm{pm},\beta}$ are the
stationary frame components of $\eta_{\mathrm{pm}}$, and
$\hat L(\hat\theta_{\mathrm{pm}})$ has the form of \eqref{eq:L_theta} with
$\hat L_d$ and $\hat L_q$.

Let $\Psi>0$ denote the adaptive equivalent flux magnitude. To match both the
direction and magnitude of the online auxiliary flux, define the loss function:
\begin{equation}
\begin{aligned}
\mathcal T
&=\frac{1}{2}\left(\norm{\eta_{\mathrm{pm}}}-\Psi\right)^2
\end{aligned}
\label{eq:eq_flux_loss}
\end{equation}
where $\mathcal T\geq0$ is the equivalent flux estimation loss. Gradient
descent on \eqref{eq:eq_flux_loss} gives:
\begin{equation}
\dot{\Psi}
=-k_\psi\frac{\partial \mathcal T}{\partial \Psi}
=k_\psi\left(\norm{\eta_{\mathrm{pm}}}-\Psi\right)
\label{eq:eq_flux_gradient}
\end{equation}
where $k_\psi>0$ is the update gain and $\dot{\Psi}$ denotes the gradient update rate. To bound the update rate during large transients,
\eqref{eq:eq_flux_gradient} is implemented in the nonlinear form:
\begin{equation}
\dot{\Psi}
=k_\psi\tanh\!\left[
k_\psi\left(\norm{\eta_{\mathrm{pm}}}-\Psi\right)
\right]
\label{eq:eq_flux_update}
\end{equation}

The same gain is used outside and inside $\tanh(\cdot)$ to reduce the number of
tuning parameters. The update uses only the computable magnitude
$\norm{\eta_{\mathrm{pm}}}$, so the unknown theoretical vector
$\eta_{\mathrm{eq}}$ is not required by the implementation.

The adaptive equivalent flux magnitude is embedded in the nonlinear observer as:
\begin{equation}
\dot{\hat x}_{\mathrm{pm}}
=v-\hat R_s i+\Gamma_{\mathrm{pm}}\eta_{\mathrm{pm}}
\left(\Psi^2-\norm{\eta_{\mathrm{pm}}}^2\right)
\label{eq:adaptive_observer}
\end{equation}
where $\Gamma_{\mathrm{pm}}>0$ is the observer gain.

The two vectors have different roles. The vector $\eta_{\mathrm{eq}}$ is the
unknown quasi-steady mismatch equivalent derived in the decomposition above, whereas
$\eta_{\mathrm{pm}}$ is calculated at every sampling instant. Their difference,
using \eqref{eq:eta_mismatch_start} and \eqref{eq:adaptive_eta}, is:
\begin{equation}
        \eta_{\mathrm{pm}}-\eta_{\mathrm{eq}}
=\left(\hat x_{\mathrm{pm}}-x-\Delta x_R\right)
-\left[\hat L(\hat\theta_{\mathrm{pm}})-\hat L(\theta)\right]i 
\label{eq:eta_pm_eq_error}
\end{equation}

The objective of \eqref{eq:adaptive_eta}-\eqref{eq:adaptive_observer} is
therefore to make the computable vector $\eta_{\mathrm{pm}}$ track the unknown
equivalent flux $\eta_{\mathrm{eq}}$ as closely as possible. As indicated by
\eqref{eq:eta_pm_eq_error}, this tracking is achieved by adjusting
$\hat x_{\mathrm{pm}}$ so that the mismatch effects of flux,
stator resistance, and $d$- and $q$-axis inductances are absorbed on the
flux side.

\subsection{Compensation Mechanism for Parameter Mismatches}

The update law changes only the scalar radius associated with
$\eta_{\mathrm{pm}}$. It can therefore compensate mismatch terms that appear as
equivalent flux radius variations, but it cannot independently cancel a
tangential component. This distinction gives the four compensation mechanisms
below.

\textit{1) Flux mismatch:} When only $\Delta\psi_f$ exists,
\eqref{eq:psi_eq_n} and \eqref{eq:psi_eq_q} reduce to
$\psi_{\mathrm{eq},n}=\hat\psi_f+\Delta\psi_f=\psi_f$ and
$\psi_{\mathrm{eq},q}=0$. The mismatch only changes the radius of the
equivalent flux circle. Therefore, $\Psi$ follows
$\norm{\eta_{\mathrm{pm}}}=\psi_f$ without creating a steady angular
bias.

\textit{2) Resistance mismatch:} When only $\Delta R_s$ exists, the
normal and tangential parts become:
\begin{equation}
\psi_{\mathrm{eq},n}=\psi_f+\frac{\Delta R_s}{\omega}i_q,
\quad
\psi_{\mathrm{eq},q}=-\frac{\Delta R_s}{\omega}i_d 
\label{eq:R_mismatch_terms}
\end{equation}

The normal term $(\Delta R_s/\omega)i_q$ changes the equivalent flux radius
and is therefore represented in the adaptive radius. A tangential resistance
projection remains when $i_d\neq0$, but this residual is proportional to $i_d$ and
inversely proportional to the electrical speed $\omega$. Therefore, in the
medium- and high-speed region, resistance mismatch is largely compensated. The dominant tangential error source is
still the $L_q$ term $\Delta L_q i_q$, whereas the tangential resistance
projection depends jointly on speed and $d$-axis current.

\textit{3) $d$-axis inductance mismatch:} When only $\Delta L_d$ exists,
$\psi_{\mathrm{eq},n}=\psi_f+\Delta L_d i_d$ and
$\psi_{\mathrm{eq},q}=0$. The $L_d$ error is collinear with the
$d$-axis unit vector $n$, so it appears as a radius variation rather than a
direction error. This radial error can be compensated by the adaptive law
\eqref{eq:eq_flux_update}.

\textit{4) $q$-axis inductance mismatch:} When only $\Delta L_q$ exists, the
theoretical mismatch vector is
$\eta_{\mathrm{eq}}=\psi_f n+\Delta L_q i_q q$, which determines the
direction approached by $\eta_{\mathrm{pm}}$. The term
$\Delta L_q i_q q$ is tangential and tilts the observed vector away from the
true rotor $d$ axis. The resulting position error satisfies:
\begin{equation}
\Delta\theta_{L_q}
=\atan\left(\frac{\Delta L_q i_q}{\psi_f}\right)
\approx
\frac{\Delta L_q i_q}{\psi_f}
\label{eq:Lq_angle_error}
\end{equation}
where \(\Delta\theta_{L_q}\) is the additional position bias produced by
\(L_q\) mismatch, and the approximation holds for small errors. The adaptive law can only make
$\Psi$ approach $\norm{\eta_{\mathrm{pm}}}$, whose quasi-steady value is
$\sqrt{\psi_f^2+(\Delta L_q i_q)^2}$; it cannot remove
the orthogonal component $\Delta L_q i_q q$. Therefore, the effect of $L_q$
mismatch grows with load current and directly produces a error that
cannot be removed by the scalar flux update alone.

\begin{figure*}[!t]
    \centering
    \subfloat[]{
        \includegraphics[width=0.46\linewidth]{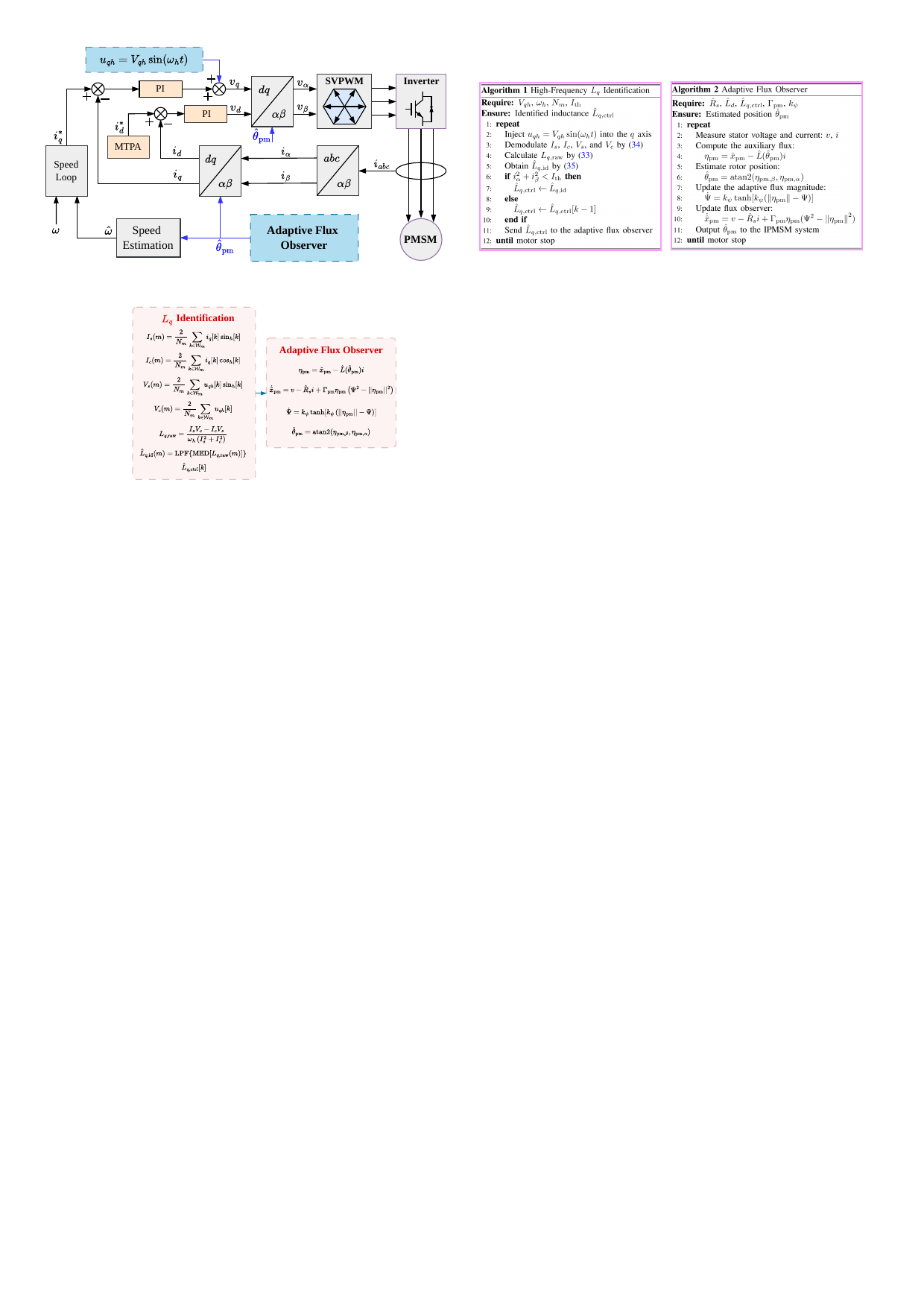}
    }
    \subfloat[]{
        \includegraphics[width=0.5\linewidth]{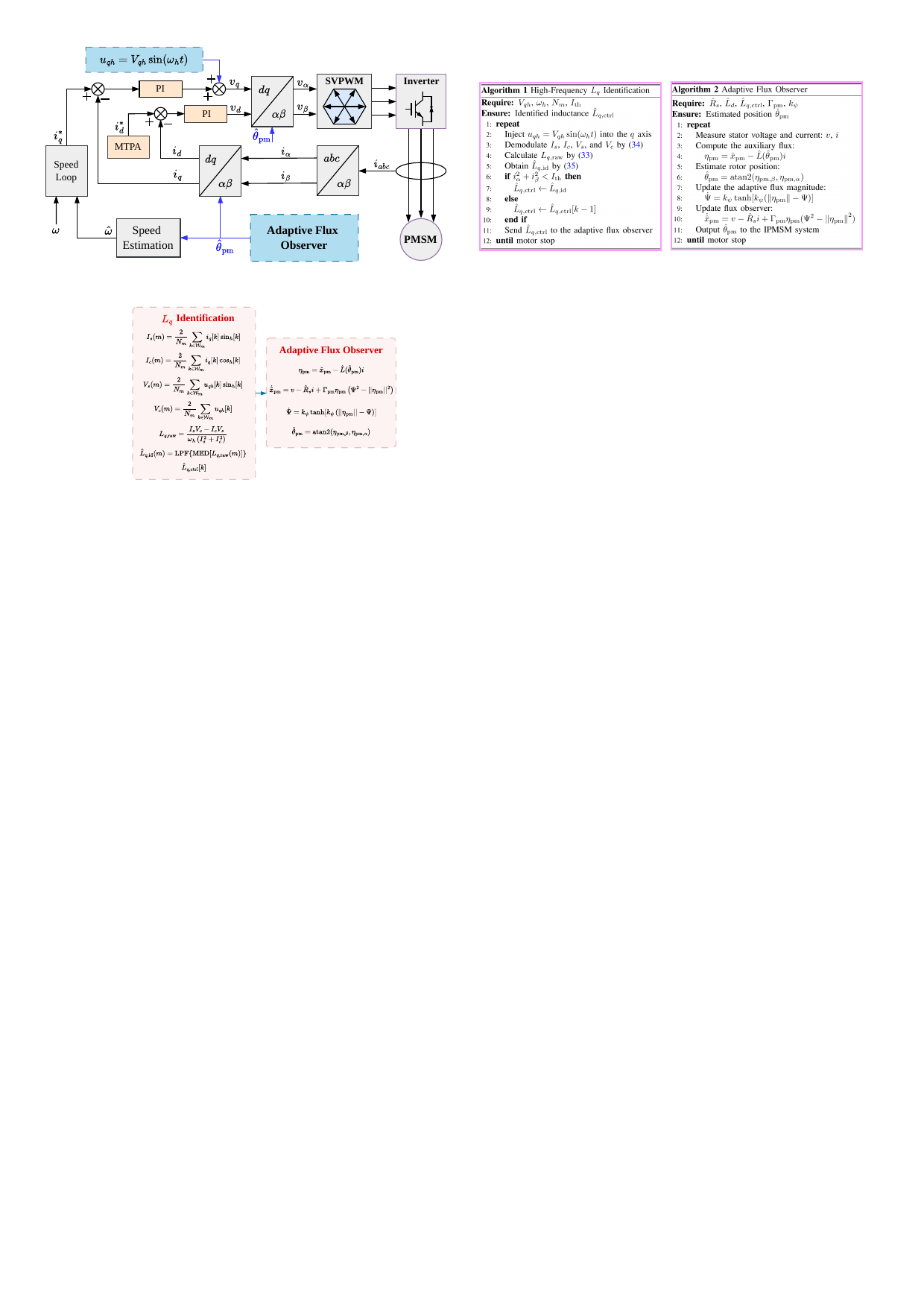}
    }
   \caption{\textcolor{black}{Proposed sensorless control framework for the
   IPMSM. (a) Overall block diagram. (b) Proposed adaptive flux observer
   and $L_q$ identification algorithms.} }
    \label{Structure diagram}
\end{figure*}

\section{Inductance Identification Strategy}

The previous section shows that the adaptive flux observer does not require
precise $\psi_f$, $L_d$, or $R_s$, but it remains sensitive to $L_q$ mismatch.
This section identifies $L_q$ using a small high-frequency
voltage injected into the $q$ axis.

\subsection{High-Frequency Small Signal Model}

Let $\omega_h=2\pi f_h$ denote the injection frequency. When
$\omega_h$ is much higher than the fundamental electrical frequency and the
bandwidth of the current controller, the fundamental current, speed, and
saturation operating point can be treated as slow variables within one
injection period. The high-frequency $q$-axis voltage and current satisfy the
local small signal model:
\begin{equation}
u_{qh}=V_{qh}\sin(\omega_h t)
=R_{\mathrm{hf}} i_{qh}+L_q\frac{\dd i_{qh}}{\dd t}
\label{eq:hf_rl_model}
\end{equation}
where $u_{qh}$ and $i_{qh}$ are the high-frequency $q$-axis voltage and current
components, $V_{qh}$ is the injected voltage amplitude, $R_{\mathrm{hf}}$ is
the effective resistance at the injection frequency, and $L_q$ denotes the
local effective $q$-axis inductance used by the observer. After synchronous
demodulation, write:
\begin{equation}
u_{qh}=V_s\sin\omega_h t+V_c\cos\omega_h t
\label{eq:hf_u}
\end{equation}
\begin{equation}
i_{qh}=I_s\sin\omega_h t+I_c\cos\omega_h t 
\label{eq:hf_i}
\end{equation}
where $V_s$ and $V_c$ are the sine and cosine coefficients of $u_{qh}$, and
$I_s$ and $I_c$ are the sine and cosine coefficients of $i_{qh}$.
Substituting \eqref{eq:hf_i} into \eqref{eq:hf_rl_model} and matching sine and
cosine coefficients gives:
\begin{equation}
V_s=R_{\mathrm{hf}}I_s-\omega_h L_q I_c,
\quad
V_c=R_{\mathrm{hf}}I_c+\omega_h L_q I_s 
\label{eq:hf_coeff}
\end{equation}

The antisymmetric combination cancels $R_{\mathrm{hf}}$:
\begin{equation}
I_sV_c-I_cV_s
=\omega_h L_q\left(I_s^2+I_c^2\right)
\label{eq:hf_anti}
\end{equation}

Therefore, the raw $q$-axis inductance estimate $L_{q,\mathrm{raw}}$ is:
\begin{equation}
L_{q,\mathrm{raw}}
=
\frac{I_sV_c-I_cV_s}
{\omega_h\left(I_s^2+I_c^2\right)}
\label{eq:Lq_raw}
\end{equation}

\subsection{Synchronous Demodulation and Filtering}

For the $m$-th identification window $\mathcal W_m$ containing an integer
number of injection periods, the demodulated coefficients are computed in the
hardware system as:
\begin{equation}
\begin{aligned}
I_s(m)&=\frac{2}{N_m}\sum_{k\in\mathcal W_m}i_q[k]\sin_h[k]\\
I_c(m)&=\frac{2}{N_m}\sum_{k\in\mathcal W_m}i_q[k]\cos_h[k]\\
V_s(m)&=\frac{2}{N_m}\sum_{k\in\mathcal W_m}u_{qh}[k]\sin_h[k]\\
V_c(m)&=\frac{2}{N_m}\sum_{k\in\mathcal W_m}u_{qh}[k]\cos_h[k]
\end{aligned}
\label{eq:demod}
\end{equation}
where $m$ is the window index, $N_m$ is the number of samples in
$\mathcal W_m$, and $\sin_h[k]$ and $\cos_h[k]$ are the sampled synchronous
references at the injection frequency. The raw estimate is then processed by a three-point
median filter and a low-pass filter:
\begin{equation}
\hat L_{q,\mathrm{id}}(m)
=\operatorname{LPF}\{\operatorname{MED}[L_{q,\mathrm{raw}}(m)]\}.
\label{eq:Lq_median}
\end{equation}
where $\operatorname{MED}[\cdot]$ denotes a median operation,
$\operatorname{LPF}\{\cdot\}$ denotes a low-pass filter, and
$\hat L_{q,\mathrm{id}}$ is the filtered $q$-axis inductance sent to the
controller once per demodulation window.

Noted that the experiments in \cite{Khlaief2012} show that a decrease in $L_q$ can
significantly degrade the accuracy of the estimated position. For an IPMSM,
magnetic saturation under load may also make the identified incremental
inductance decrease. If this load-dependent value is directly sent to the
observer, the tangential flux error may increase instead of being reduced.
Therefore, only the inductance identified near no load or light load is used
for observer calibration, which is implemented by a current threshold. With
$k$ is the sampling instant and $I_{\mathrm{th}}>0$ denoting the current
threshold, the actual input $\hat L_{q,\mathrm{ctrl}}$ is:
\begin{equation}
\hat L_{q,\mathrm{ctrl}}[k]
=
\begin{cases}
\hat L_{q,\mathrm{id}}[k],
& i_\alpha^2[k]+i_\beta^2[k]<I_{\mathrm{th}}\\
\hat L_{q,\mathrm{ctrl}}[k-1],
& i_\alpha^2[k]+i_\beta^2[k]\geq I_{\mathrm{th}}
\end{cases}
\label{eq:Lq_current_gate}
\end{equation}

This gate freezes the observer input under load and prevents saturation-induced
inductance drift from being injected continuously into the tangential flux
channel. With this identification loop, the dominant tangential mismatch
becomes:
\begin{equation}
\Delta L_q i_q
=\left(L_q-\hat L_{q,\mathrm{ctrl}}\right)i_q 
\label{eq:Lq_after_id}
\end{equation}
where $L_q-\hat L_{q,\mathrm{ctrl}}$ is the difference between the real and
identified $q$-axis inductances. If the current-gated identification reduces
this difference compared with the fixed nominal value, the tangential flux and
position-error bounds are reduced. The overall IPMSM control framework is shown in
Fig. \ref{Structure diagram}.

\section{Stability Analysis}

The nonlinear update law \eqref{eq:eq_flux_update} is driven by the online
auxiliary flux \(\eta_{\mathrm{pm}}\). Define its magnitude as:
\begin{equation}
s(t)=\norm{\eta_{\mathrm{pm}}(t)}
\label{eq:s_pm_def}
\end{equation}
where \(s(t)\) is the equivalent flux magnitude available to the adaptive law.
This definition does not require the unknown theoretical vector
\(\eta_{\mathrm{eq}}\) in \eqref{eq:eta_eq_dec}. 

The analysis is carried out on a compact operating set where the online flux
magnitude and its rate are bounded, namely
$0<s_{\min}\leq s(t)\leq s_{\max}$ and
$\abs{\dot s(t)}\leq\Phi$, where \(s_{\min}\), \(s_{\max}\), and \(\Phi\)
are positive constants.

\subsection{Equivalent Flux Magnitude Tracking}

Define the equivalent flux magnitude tracking error as:
\begin{equation}
\tilde{\Psi}
=\Psi-s(t)
=\Psi-\norm{\eta_{\mathrm{pm}}}
\label{eq:psi_eq_error}
\end{equation}

\begin{theorem}
\label{thm:eq_flux_mag_tracking}
For the IPMSM system described by \eqref{eq:ipmsm_flux_dq} and
\eqref{eq:input_mismatch}, with the observer \eqref{eq:adaptive_eta},
\eqref{eq:adaptive_observer} and the equivalent flux update law
\eqref{eq:eq_flux_update} under parameter mismatches, the magnitude \(\Psi\) is uniformly bounded. The tracking error
\(\tilde{\Psi}\) in \eqref{eq:psi_eq_error} converges to the compact set
\(\Omega_\Psi\), defined as follows:
\begin{equation}
\Omega_\Psi=
\left\{
\tilde{\Psi}\in\R\;\big|\;
|{\tilde{\Psi}}|
\leq
\epsilon_{\mathrm{eq}}
\right\}, \ \epsilon_{\mathrm{eq}}=\frac{\Phi}{k_\psi(k_\psi-\Phi)}
\label{eq:psi_eq_error_bound}
\end{equation}
\end{theorem}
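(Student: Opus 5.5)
The plan is a scalar Lyapunov argument on the tracking error $\tilde\Psi$ in \eqref{eq:psi_eq_error}. The observer \eqref{eq:adaptive_eta}, \eqref{eq:adaptive_observer} enters only through the signal $s(t)=\norm{\eta_{\mathrm{pm}}}$. On the compact operating set, $s$ is only known to satisfy $s_{\min}\leq s\leq s_{\max}$ and $\abs{\dot s}\leq\Phi$. I will therefore treat $s$ as a bounded exogenous signal with bounded derivative, and analyze only the first-order update law \eqref{eq:eq_flux_update}. I will assume the implicit gain condition $k_\psi>\Phi$, which is needed for $\epsilon_{\mathrm{eq}}$ to be positive and finite.

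First, I differentiate \eqref{eq:psi_eq_error} along \eqref{eq:eq_flux_update}. Since $\norm{\eta_{\mathrm{pm}}}-\Psi=-\tilde\Psi$ and $\tanh$ is odd,
\begin{equation*}
\dot{\tilde\Psi}=-k_\psi\tanh\!\left(k_\psi\tilde\Psi\right)-\dot s .
\end{equation*}
I then take $V=\tfrac12\tilde\Psi^2$. Using $\abs{\dot s}\leq\Phi$ and the fact that $\tilde\Psi\tanh(k_\psi\tilde\Psi)=\abs{\tilde\Psi}\tanh(k_\psi\abs{\tilde\Psi})$, this gives
\begin{equation*}
\dot V\leq-\abs{\tilde\Psi}\left[k_\psi\tanh\!\left(k_\psi\abs{\tilde\Psi}\right)-\Phi\right].
\end{equation*}

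The key step, and the only nontrivial one, is converting the condition $k_\psi\tanh(k_\psi\abs{\tilde\Psi})>\Phi$ into the explicit radius $\epsilon_{\mathrm{eq}}$. I will use the elementary inequality $\tanh y\geq y/(1+y)$ for $y\geq0$. To prove it, set $g(y)=(1+y)\tanh y-y$. Then $g(0)=0$, and
\begin{equation*}
g'(y)=\tanh y+(1+y)\operatorname{sech}^2 y-1=\tanh y\,(1-\tanh y)+y\operatorname{sech}^2 y\geq0 .
\end{equation*}
With $y=k_\psi\abs{\tilde\Psi}$, it suffices to have $k_\psi\, y/(1+y)>\Phi$. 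This is equivalent to $\abs{\tilde\Psi}\,k_\psi(k_\psi-\Phi)>\Phi$, that is, $\abs{\tilde\Psi}>\epsilon_{\mathrm{eq}}$. Hence $\dot V<0$ whenever $\tilde\Psi\notin\Omega_\Psi$.

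Standard uniform ultimate boundedness reasoning then finishes the argument. Because $\dot V<0$ outside $\Omega_\Psi$, the interval $\Omega_\Psi$ is positively invariant and every trajectory enters it. Entry occurs in finite time, because outside any strictly larger interval $\dot V$ is bounded above by a negative constant. Consequently $\abs{\tilde\Psi(t)}\leq\max\{\abs{\tilde\Psi(0)},\epsilon_{\mathrm{eq}}\}$ for all $t$. Uniform boundedness of $\Psi$ follows from $\Psi=\tilde\Psi+s$ with $s\leq s_{\max}$, so that $\abs{\Psi(t)}\leq s_{\max}+\max\{\abs{\tilde\Psi(0)},\epsilon_{\mathrm{eq}}\}$.

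I expect the main obstacle to be the $\tanh$ lower bound. The cruder bound $\tanh y\geq\tanh(\cdot)$ at a fixed point would give a radius of the form $\operatorname{artanh}(\Phi/k_\psi)/k_\psi$, not the stated one. The rational bound $y/(1+y)$ is exactly what produces $\Phi/[k_\psi(k_\psi-\Phi)]$. I will also remark that the result deliberately bounds only the magnitude mismatch: the tangential component discussed after \eqref{eq:Lq_angle_error} is absorbed into $s(t)$ and is not controlled here.
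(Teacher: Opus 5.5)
Your proposal is correct and is essentially the paper's argument. It uses the same error dynamics, the same Lyapunov function $V_\Psi=\tfrac12\tilde\Psi^2$, and the same key inequality $\abs{\tanh a}\geq\abs{a}/(1+\abs{a})$ to produce $\epsilon_{\mathrm{eq}}$. The only cosmetic difference is in the closing step: the paper applies a comparison principle to $\gamma_\psi=\abs{\tilde\Psi}$, while you use the sign of $\dot V$ outside $\Omega_\Psi$ together with positive invariance. You also supply three things the paper leaves unstated: a proof of the $\tanh$ bound, the implicit gain condition $k_\psi>\Phi$, and the explicit bound on $\Psi$.

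Two small corrections. First, in your closing remark the radius $\operatorname{artanh}(\Phi/k_\psi)/k_\psi$ is sharper, not cruder, than $\epsilon_{\mathrm{eq}}$. Because $\tanh y\geq y/(1+y)$, that radius is smaller and implies the stated bound a fortiori. Second, your stated reason for finite-time entry only gives entry into strictly larger intervals. Finite-time entry into $\Omega_\Psi$ itself needs the strict inequality $\tanh y>y/(1+y)$ for $y>0$. This is harmless, since the theorem only claims convergence.
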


\begin{IEEEproof}
From \eqref{eq:eq_flux_update} and \eqref{eq:psi_eq_error}, we have:
\begin{equation}
\dot{\tilde{\Psi}}
=-k_\psi\tanh(k_\psi\tilde{\Psi})-\dot s 
\label{eq:psi_eq_error_dot}
\end{equation}

Choose the Lyapunov function $V_\Psi$ as follows:
\begin{equation}
V_\Psi=\frac{1}{2}\tilde{\Psi}^2 
\label{eq:V_psi}
\end{equation}
Its derivative satisfies:
\begin{equation}
\begin{aligned}
\dot V_\Psi
&=\tilde{\Psi}\dot{\tilde{\Psi}} \leq
-k_\psi|{\tilde{\Psi}}|
 |{\tanh(k_\psi\tilde{\Psi})}|
+|{\tilde{\Psi}}|\Phi 
\end{aligned}
\label{eq:V_psi_dot}
\end{equation}

Using
\(\abs{\tanh a}\geq\abs{a}/(1+\abs{a})\), define
\(\gamma_\psi=|{\tilde{\Psi}}|\). For
\(\gamma_\psi>0\), \eqref{eq:V_psi_dot} gives:
\begin{equation}
\dot\gamma_\psi
\leq
-\frac{k_\psi^2\gamma_\psi}{1+k_\psi\gamma_\psi}
+\Phi 
\label{eq:gamma_psi_compare}
\end{equation}

The right-hand side of \eqref{eq:gamma_psi_compare} is strictly decreasing and
has the unique positive equilibrium \(\epsilon_{\mathrm{eq}}\) defined in
\eqref{eq:psi_eq_error_bound}. The comparison principle implies that
\(\gamma_\psi\) ultimately enters and remains in
\([0,\epsilon_{\mathrm{eq}}]\). Therefore, \eqref{eq:psi_eq_error_bound} holds.
\end{IEEEproof}

Theorem~\ref{thm:eq_flux_mag_tracking}
proves that the update law makes
\(\Psi\) track the online equivalent flux magnitude
\(\norm{\eta_{\mathrm{pm}}}\). For a constant operating point,
\(\Phi=0\), and \(\tilde{\Psi}\) converges to zero.

\subsection{Normal and Tangential Tracking Error Bounds}

The scalar result above does not imply that the tangential flux has been
eliminated. To project the result without identifying
\(\eta_{\mathrm{pm}}\) with the theoretical \(\eta_{\mathrm{eq}}\), decompose
the online auxiliary flux in the true rotor frame as:
\begin{equation}
\begin{aligned}
\eta_{\mathrm{pm}}
&=\psi_{\mathrm{pm},n}n+\psi_{\mathrm{pm},q}q\\
\psi_{\mathrm{pm},n}&=n\T\eta_{\mathrm{pm}},\quad
\psi_{\mathrm{pm},q}=q\T\eta_{\mathrm{pm}}
\end{aligned}
\label{eq:eta_pm_nq}
\end{equation}
where \(\psi_{\mathrm{pm},n}\) and \(\psi_{\mathrm{pm},q}\) are the online
normal and tangential flux components. Consequently,
\begin{equation}
s
=\sqrt{\psi_{\mathrm{pm},n}^2+\psi_{\mathrm{pm},q}^2},
\quad
n_{\mathrm{pm}}
=\frac{\psi_{\mathrm{pm},n}}{s}n
+\frac{\psi_{\mathrm{pm},q}}{s}q 
\label{eq:n_pm_nq}
\end{equation}
where \(n_{\mathrm{pm}}=\eta_{\mathrm{pm}}/\norm{\eta_{\mathrm{pm}}}\) is the
unit vector along the online auxiliary flux.
On the same compact operating set, define the bounded projection ratios:
\begin{equation}
\rho_n=\sup_t\frac{\abs{\psi_{\mathrm{pm},n}(t)}}{s(t)},
\quad
\rho_q=\sup_t\frac{\abs{\psi_{\mathrm{pm},q}(t)}}{s(t)} 
\label{eq:projection_ratios}
\end{equation}
where \(\rho_n\) and \(\rho_q\) denote the maximum fractions of the online
auxiliary flux magnitude projected onto the normal and tangential directions,
respectively.

Define the reconstructed equivalent flux vector as
\(\eta_{\mathrm{ad}}=\Psi n_{\mathrm{pm}}\). Relative to the \(d\)-axis
component \(\psi_{\mathrm{pm},n}n\), define the $d$-axis and tangential
residuals as
\(\varepsilon_d=n\T(\eta_{\mathrm{ad}}-\psi_{\mathrm{pm},n}n)\) and
\(\varepsilon_q=q\T(\eta_{\mathrm{ad}}-\psi_{\mathrm{pm},n}n)\).

\begin{theorem}
 \label{thm:dq_residual_bounds}
Under the conditions of
Theorem~\ref{thm:eq_flux_mag_tracking}
and with the definitions in \eqref{eq:eta_pm_nq}--\eqref{eq:projection_ratios}, the
$d$-axis and tangential residuals converge to the compact sets \(\Omega_d\)
and \(\Omega_q\), defined as follows:
\begin{equation}
\begin{aligned}
\Omega_d
&=
\left\{
\varepsilon_d\in\R\;\big|\;
\abs{\varepsilon_d}
\leq
\epsilon_{\mathrm{eq}}\rho_n
\right\}\\
\Omega_q
&=
\left\{
\varepsilon_q\in\R\;\big|\;
\abs{\varepsilon_q}
\leq
\epsilon_{\mathrm{eq}}\rho_q + \sup_t\abs{\psi_{\mathrm{pm},q}(t)}
\right\}
\end{aligned}
\label{eq:dq_residual_bound}
\end{equation}
\end{theorem}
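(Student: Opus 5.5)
The plan is to express both residuals in closed form through the scalar tracking error $\tilde\Psi$, and then apply Theorem~\ref{thm:eq_flux_mag_tracking} directly.

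\textbf{Rewriting the reconstructed vector.} By \eqref{eq:n_pm_nq} and the definition $\eta_{\mathrm{ad}}=\Psi n_{\mathrm{pm}}$,
\[
\eta_{\mathrm{ad}}=\frac{\Psi}{s}\left(\psi_{\mathrm{pm},n}n+\psi_{\mathrm{pm},q}q\right).
\]
Using \eqref{eq:psi_eq_error} we substitute $\Psi=s+\tilde\Psi$. This splits $\eta_{\mathrm{ad}}$ into $\eta_{\mathrm{pm}}$ plus the perturbation $(\tilde\Psi/s)\eta_{\mathrm{pm}}$. The division by $s$ is legitimate because $s\geq s_{\min}>0$ on the compact operating set.

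\textbf{Projecting onto the rotor axes.} Subtracting $\psi_{\mathrm{pm},n}n$ and using $n\T n=q\T q=1$ and $n\T q=0$ gives
\[
\varepsilon_d=\tilde\Psi\,\frac{\psi_{\mathrm{pm},n}}{s},\qquad
\varepsilon_q=\psi_{\mathrm{pm},q}+\tilde\Psi\,\frac{\psi_{\mathrm{pm},q}}{s}.
\]

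\textbf{Bounding each residual.} For $\varepsilon_d$ we bound $\abs{\psi_{\mathrm{pm},n}}/s$ by $\rho_n$ from \eqref{eq:projection_ratios}, which gives $\abs{\varepsilon_d}\leq\abs{\tilde\Psi}\rho_n$. For $\varepsilon_q$ the triangle inequality gives
\[
\abs{\varepsilon_q}\leq\abs{\tilde\Psi}\rho_q+\sup_t\abs{\psi_{\mathrm{pm},q}(t)}.
\]
Both bounds hold pointwise in time.

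\textbf{Passing to the ultimate sets.} By Theorem~\ref{thm:eq_flux_mag_tracking}, $\abs{\tilde\Psi}$ ultimately enters and stays in $[0,\epsilon_{\mathrm{eq}}]$. Substituting this into the two bounds shows that $\varepsilon_d$ and $\varepsilon_q$ ultimately enter and remain in $\Omega_d$ and $\Omega_q$. This is the same entry time as in Theorem~\ref{thm:eq_flux_mag_tracking}, so the convergence is uniform. Both sets are compact intervals because $\rho_n,\rho_q\leq1$ and because $\psi_{\mathrm{pm},q}$ is bounded by $s_{\max}$.

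\textbf{Main obstacle.} The only delicate point is that the tangential residual contains the term $\psi_{\mathrm{pm},q}$, which does not depend on $\tilde\Psi$. This term cannot be shrunk by the adaptive law, so it has to be carried as the separate supremum term in $\Omega_q$. This is exactly the residual tangential flux that motivates the $L_q$ identification in the preceding section.
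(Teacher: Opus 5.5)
Your proposal is correct and follows the paper's proof step for step. You substitute $\Psi=s+\tilde\Psi$ into $\eta_{\mathrm{ad}}=\Psi n_{\mathrm{pm}}$, project onto $n$ and $q$ to obtain $\varepsilon_d=\tilde\Psi\psi_{\mathrm{pm},n}/s$ and $\varepsilon_q=\psi_{\mathrm{pm},q}+\tilde\Psi\psi_{\mathrm{pm},q}/s$, and then combine these with $\rho_n$, $\rho_q$ and the ultimate bound of Theorem~\ref{thm:eq_flux_mag_tracking}; your remarks that $s\geq s_{\min}>0$ justifies the division and that $\abs{\psi_{\mathrm{pm},q}}\leq s_{\max}$ gives compactness are valid details the paper leaves implicit.
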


\begin{IEEEproof}
Substituting \eqref{eq:n_pm_nq} into
\(\eta_{\mathrm{ad}}=\Psi n_{\mathrm{pm}}\) gives:
\begin{equation}
\eta_{\mathrm{ad}}
=\Psi\frac{\psi_{\mathrm{pm},n}}{s}n
+\Psi\frac{\psi_{\mathrm{pm},q}}{s}q
\label{eq:eta_ad_expand}
\end{equation}

Because \(s^2=\psi_{\mathrm{pm},n}^2+\psi_{\mathrm{pm},q}^2\), subtracting the
$d$-axis component \(\psi_{\mathrm{pm},n}n\) yields:
\begin{equation}
\begin{aligned}
\eta_{\mathrm{ad}}-\psi_{\mathrm{pm},n}n
&=
\left(\Psi\frac{\psi_{\mathrm{pm},n}}{s}-\psi_{\mathrm{pm},n}\right)n
+\Psi\frac{\psi_{\mathrm{pm},q}}{s}q\\
&=
\tilde{\Psi}\frac{\psi_{\mathrm{pm},n}}{s}n
+\left(\psi_{\mathrm{pm},q}
+\tilde{\Psi}\frac{\psi_{\mathrm{pm},q}}{s}\right)q 
\end{aligned}
\label{eq:eta_ad_minus_d}
\end{equation}

Projecting
\eqref{eq:eta_ad_minus_d} onto the true basis vectors \(n\) and
\(q\) gives:
\begin{equation}
\varepsilon_d
=\tilde{\Psi}
\frac{\psi_{\mathrm{pm},n}}{s},
\quad
\varepsilon_q
=\psi_{\mathrm{pm},q}
+\tilde{\Psi}
\frac{\psi_{\mathrm{pm},q}}{s}
\label{eq:dq_residual}
\end{equation}

Combining \eqref{eq:dq_residual}, \eqref{eq:psi_eq_error_bound}, and
\eqref{eq:projection_ratios} proves \eqref{eq:dq_residual_bound}.
\end{IEEEproof}

Theorem~\ref{thm:dq_residual_bounds}
follows by projecting the online vector \(\eta_{\mathrm{pm}}\).
When \(\tilde{\Psi}\to0\), the $d$-axis residual tends to zero, whereas the
tangential residual tends to \(\psi_{\mathrm{pm},q}\). From
\eqref{eq:psi_eq_q}, the corresponding tangential channel is
\(\Delta L_q i_q-(\Delta R_s/\omega)i_d\). Thus, scalar equivalent flux
adaptation removes the magnitude error but cannot independently cancel the
tangential component dominated by \(L_q\) mismatch. Therefore, \(L_q\)
identification in Section IV is needed to reduce this remaining error.

\begin{figure}[!t]
    \centering
        \includegraphics[width=0.65\linewidth]{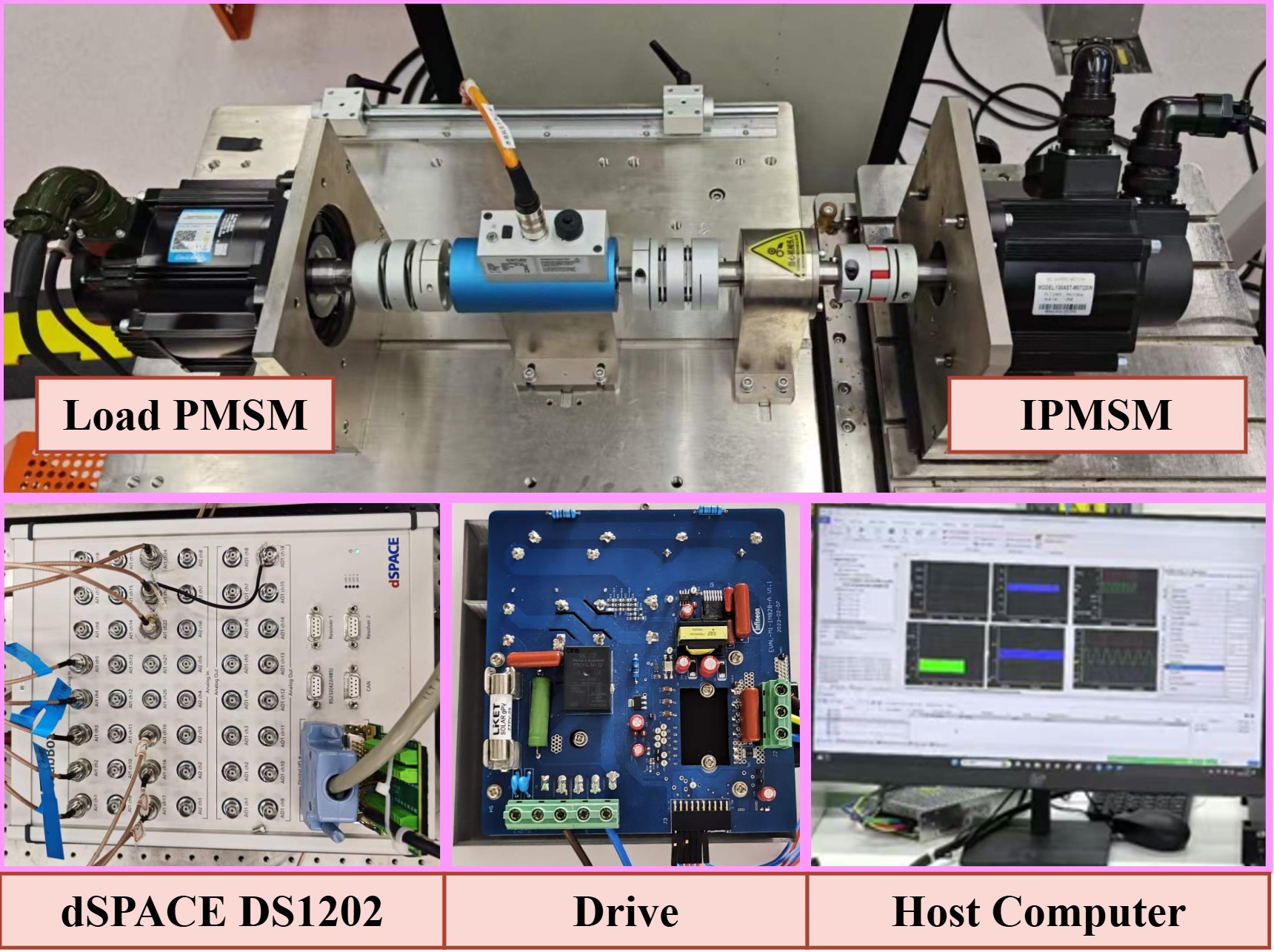} 
    \caption{ Experimental platform of IPMSM.} 
    \label{Motor Platform}
\end{figure}

\begin{table}[!t]
  \caption{Parameters of the IPMSM System}
  \renewcommand\arraystretch{1.25} 
  \small
  \centering
  \setlength{\tabcolsep}{5mm} 
  \begin{tabular}{c c c } \hline \hline
    Description & Parameter & Value   \\ \hline 
    Pole pairs & $N_p$ & 5  \\
    DC voltage (V) & $U_{dc}$ & 300 \\ 
    Rated power (kW) & $P_r$ & 1.5 \\
    Rated torque (N$\cdot$m) & $T_r$ & 7 \\
    Rated current (A) & $I_r$ & 6 \\
    \hline
    $d$-axis inductance (mH) & $L_d$ & 7.9  \\
    $q$-axis inductance (mH) & $L_q$ & 11.2  \\
    Stator resistance ($\Omega$) & $R_s$ & 0.495 \\
    Flux linkage (Wb) & $\psi_f $ & 0.117   \\ \hline
    Observer gain & $\Gamma_{\mathrm{pm}}$ & 5000  \\
    Update gain & $k_\psi$ & 10 \\
    \hline  \hline 
  \end{tabular}
  \label{Table.Parameters}
\end{table}

\section{Experimental Results and Analysis}

To validate the proposed parameter-robust IPMSM sensorless control scheme, an
experimental platform is built as shown in Fig. \ref{Motor Platform}. The setup consists of two
coupled PMSMs, a dSPACE DS1202 controller, a power driver, and a host computer,
allowing the adaptive flux observer and the \(L_q\) identification loop to be
tested under controlled load conditions. The speed loop is executed at 2 kHz,
whereas the current loop and PWM switching frequency are both set to 10 kHz.
The main IPMSM parameters are listed in Table \ref{Table.Parameters}. For
comparison, three sensorless controllers are used: Method 1 is the
flux observer in \cite{Khlaief2012}, Method 2 is the active-flux ESO-based
controller in \cite{Zhou2023Robust}, and Method 3 is the ADRC-based controller
in \cite{Yang2026ILADRC}. The proposed controller contains only two independent
tuning gains $\Gamma_{\mathrm{pm}}$ and $k_\psi$, which can be selected after a few practical tuning iterations.
All parameter mismatch comparison tests are carried out under full-load
operation.

\begin{figure}[!t]
    \centering
    \includegraphics[width=0.96\linewidth]{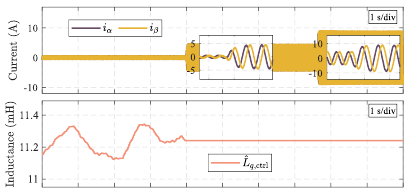}
    \vspace{-8pt}
    \caption{\textcolor{black}{Experimental results of $q$-axis inductance identification $\hat L_{q,\mathrm{ctrl}}$
    under no-load, half-load, and full-load conditions.}}
    \label{Lq_identificatio}
\end{figure}

\begin{figure}[!t]
    \centering
    \captionsetup[subfloat]{captionskip=0.05pt}
    \subfloat[]{
        \includegraphics[width=0.48\linewidth]{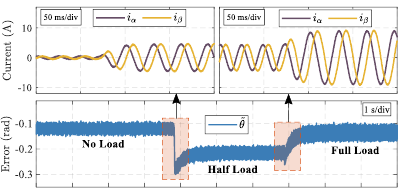}
    }
    \hspace{-7pt}
     \subfloat[]{
        \includegraphics[width=0.48\linewidth]{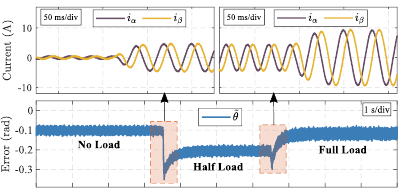}
    }\\ \vspace{-10pt}
    \subfloat[]{
        \includegraphics[width=0.48\linewidth]{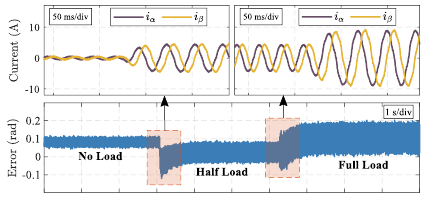}
    }
    \hspace{-7pt}
    \subfloat[]{
        \includegraphics[width=0.48\linewidth]{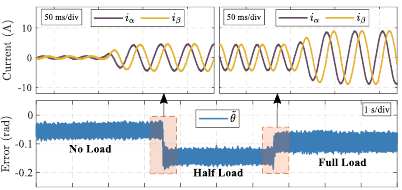}
    }
    \caption{\textcolor{black}{Experimental results under no-load, half-load, and full-load conditions at 500 rpm. (a) Method 1. (b) Method 2. (c) Method 3. (d) Our method.}}
    \label{half_load}
\end{figure}

\begin{figure}[!t]
    \centering
    \captionsetup[subfloat]{captionskip=0.05pt}
    \subfloat[]{
        \includegraphics[width=0.48\linewidth]{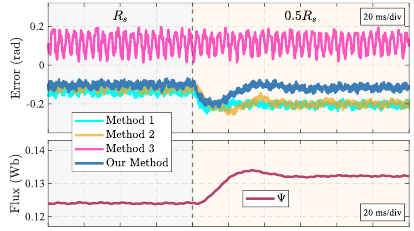}
    }
    \hspace{-7pt}
     \subfloat[]{
        \includegraphics[width=0.48\linewidth]{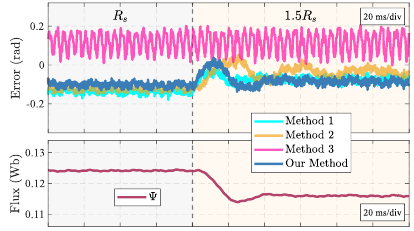}
    }
    \caption{\textcolor{black}{Experimental results under resistance mismatch at 500 rpm, showing the adaptive flux \(\Psi\) update. (a) \(R_s \to 0.5R_s\). (b) \(R_s \to 1.5R_s\).}}
    \label{Rs_0.5_1.5}
\end{figure}

\begin{figure}[!t]
    \centering
    \captionsetup[subfloat]{captionskip=0.05pt}
    \subfloat[]{
        \includegraphics[width=0.48\linewidth]{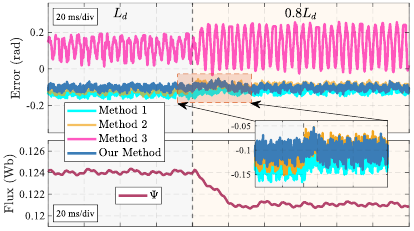}
    }
    \hspace{-7pt}
     \subfloat[]{
        \includegraphics[width=0.48\linewidth]{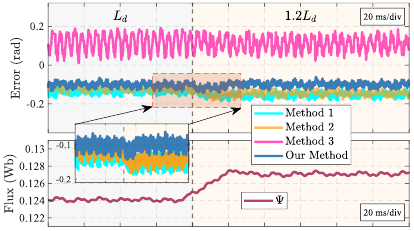}
    }
    \caption{\textcolor{black}{Experimental results under $d$-axis inductance mismatch at 500 rpm, showing the adaptive flux \(\Psi\) update. (a) \(L_d \to 0.8L_d\). (b) \(L_d \to 1.2L_d\).}}
    \label{Ld_0.8_1.2}
\end{figure}

\subsection{Performance of Inductance Identification}

The previous analysis has shown that the $L_q$ mismatch cannot be eliminated
by the adaptive flux update because it acts on the tangential equivalent flux
component. To achieve parameter-robust operation, a $q$-axis rotating voltage
injection is used to identify $L_q$, and the value obtained under no-load or
light-load conditions is then used in the proposed observer. In the experiment,
the injected voltage is $u_{qh}=3\sin(800\pi t)$, the sample number $N_m$ in
\eqref{eq:demod} is set to 20, and the current threshold $I_{\mathrm{th}}$ in \eqref{eq:Lq_current_gate} is
set to 0.5. The identification result is shown in Fig.
\ref{Lq_identificatio}. It can be observed that the identified $\hat L_{q,\mathrm{ctrl}}$ fluctuates
around the real $L_q$ value under no-load operation, which indicates accurate
inductance identification. When a larger load is applied, the identified value
is frozen.

\subsection{Performance of Load Transition}

Fig. \ref{half_load} compares the estimation performance of the four
methods at 500 rpm when the operating condition is suddenly
changed from no load to half load and then to full load. Overall, as the load
increases, the estimation errors of all methods first decrease
and then increase. Method 1 and Method 2 produce large transient errors during
load switching, with peak values of approximately 0.3-0.35 rad.
Method 3 reduces the transient error to some extent, but its error fluctuation
becomes clearly larger under full-load operation. In contrast, the proposed
method limits the transient error within approximately 0.2 rad
during both load transitions and quickly recovers to the steady-state error
range. 

\begin{table}[!t] \color{black}
  \caption{Root Mean Square Error (RMSE) of Estimation Errors under Different Parameter Mismatch Conditions. Unit: rad}
  \renewcommand\arraystretch{1.25} 
  \small
  \centering
  \setlength{\tabcolsep}{2mm} 
  \begin{tabular}{c c c c c} \hline \hline
     & Method 1 & Method 2 & Method 3 & \textbf{Our Method}   \\ \hline 
    Nominal & $0.136$ & $0.121$ & $0.118$ & {\boldmath$0.103$} \\ \hline
    $0.5R_s$ & $0.202$ & $0.194$ &$ 0.118$ & {\boldmath$0.106$} \\
    $1.5R_s$ & $0.076$ & ${0.039}$ & $0.119$ & {\boldmath$0.101$} \\ \hline
    $0.8L_d$  & $0.121$ & ${0.096}$ & $0.144$ & {\boldmath$0.104$}  \\
    $1.2L_d$  & $0.154$ & $0.144$ & $0.109$ & {\boldmath$0.103$}   \\ \hline
    $0.9L_q$  & $0.223$ & ${0.205}$ & ${0.012}$ & {\boldmath$0.186$}  \\
    $1.1L_q$  & $0.045$ & $0.028$ & $0.250$ & {\boldmath$0.104$}   \\ \hline
    $0.9\psi_f$ & $0.194$ & $0.206$ & $-$ & {\boldmath$0.103$}  \\
    $1.1\psi_f$ & $0.071$ & $0.025$ & ${-}$ & {\boldmath$0.103$}  \\
    \hline  \hline 
  \end{tabular}
  \label{Table.RMSE}
\end{table}

\subsection{Performance of $R_s$ and $L_d$ Mismatch}

Fig. \ref{Rs_0.5_1.5} shows the experimental results when \(R_s\) is suddenly
changed to 0.5\(R_s\) and 1.5\(R_s\), respectively. Method 1 and Method 2 show
clear steady-state shifts in the position estimation error, whereas Method 3
always exhibits large periodic error oscillations. In contrast, the proposed
method only produces a short transient disturbance at the switching instant
and then rapidly returns to the steady-state error range of approximately $-$0.1 rad.

The results under \(L_d\) mismatch are shown in Fig. \ref{Ld_0.8_1.2}. When
\(L_d\) is changed to 0.8\(L_d\) and 1.2\(L_d\), Method 1 and Method 2 exhibit
small steady-state error shifts, while Method 3 shows a larger error
oscillation under 0.8\(L_d\). The proposed method maintains a small and stable
error range in both cases and recovers quickly after the parameter changes.
These results confirm that the adaptive flux update can adjust the equivalent
flux according to the \(R_s\) and \(L_d\) mismatches, thereby improving
steady-state robustness and transient recovery.

\begin{figure}[!t]
    \centering
    \captionsetup[subfloat]{captionskip=0.05pt}
    \subfloat[]{
        \includegraphics[width=0.48\linewidth]{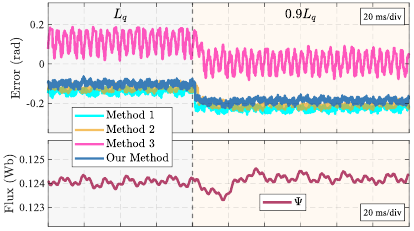}
    }
    \hspace{-7pt}
     \subfloat[]{
        \includegraphics[width=0.48\linewidth]{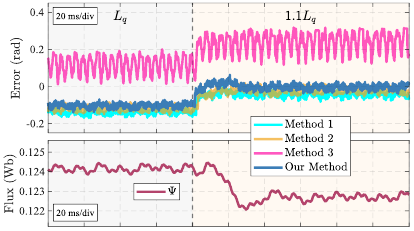}
    }
    \caption{\textcolor{black}{Experimental results under $q$-axis inductance mismatch at 500 rpm, showing the adaptive flux \(\Psi\) update. (a) \(L_q \to 0.9L_q\). (b) \(L_q \to 1.1L_q\).}}
    \label{Lq_0.9_1.1}
\end{figure}
\begin{figure}[!t]
    \centering
    \captionsetup[subfloat]{captionskip=0.05pt}
    \subfloat[]{
        \includegraphics[width=0.48\linewidth]{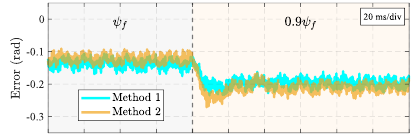}
    }
    \hspace{-7pt}
     \subfloat[]{
        \includegraphics[width=0.48\linewidth]{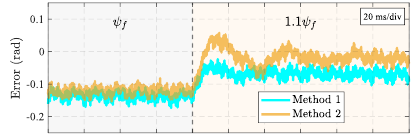}
    }\\ \vspace{-10pt}
    \subfloat[]{
        \includegraphics[width=0.48\linewidth]{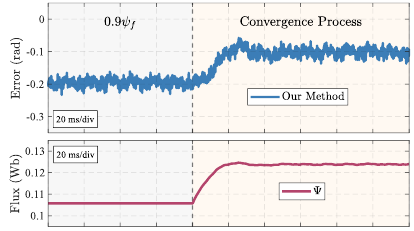}
    }
    \hspace{-7pt}
    \subfloat[]{
        \includegraphics[width=0.48\linewidth]{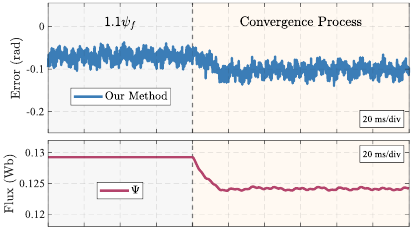}
    }
    \caption{\textcolor{black}{Experimental results under flux mismatch at
    500 rpm. (a) \(\psi_f \to 0.9\psi_f\). (b) \(\psi_f \to
    1.1\psi_f\). (c) Our method with initial \(0.9\psi_f\). (d) Our
    method with initial \(1.1\psi_f\). Method 3 does not require the flux
    parameter.}}
    \label{Psi_0.9_1.1}
\end{figure}

\subsection{Performance of $L_q$ and $\psi_f$ Mismatch}

Fig. \ref{Lq_0.9_1.1} shows the results when \(L_q\) is changed to
\(0.9L_q\) and \(1.1L_q\). The position estimation errors of all four methods
change noticeably: reducing \(L_q\) shifts the error in the negative
direction, whereas increasing \(L_q\) shifts it in the positive direction.
This result confirms that the adaptive flux update cannot fully compensate
\(L_q\) mismatch, so accurate \(L_q\) acquisition or online identification is
necessary.

Fig. \ref{Psi_0.9_1.1} further gives the results under flux mismatch. When the
flux is changed to \(0.9\psi_f\) and \(1.1\psi_f\), Method 1 and Method 2 show
clear transient errors and steady-state shifts, indicating their sensitivity
to the flux parameter. Method 3 is almost unaffected because its observer
model does not require \(\psi_f\). For the proposed method, the adaptive flux
converges to approximately \(0.124~\mathrm{Wb}\) from both initial values, and
the estimation error returns to a similar steady-state range,
verifying its compensation capability under flux mismatch. Table \ref{Table.RMSE}
summarizes the error variations of the four methods under four parameter
mismatches.

\subsection{Performance of Large Range Parameter Mismatch}

To further evaluate robustness under large parameter uncertainties, Figs.
\ref{Rs_0.2_1.8} and \ref{Ld_0.2_1.8} show the experimental results under
larger \(R_s\) and \(L_d\) mismatches, respectively. When \(R_s\) or \(L_d\) is
suddenly changed to \(0.2\) and \(1.8\) times the rated value, the adaptive
flux is adjusted accordingly. The position estimation error only exhibits a
limited transient deviation and rapidly returns to a stable range, while the
system remains stable throughout the whole test.

Fig. \ref{Rs_Ld_sine} further examines the performance under time-varying
parameters, where \(R_s\) and \(L_d\) vary sinusoidally with an amplitude of
80\% of their nominal values. Under sinusoidal \(L_d\) variation, the error
fluctuation remains almost unchanged. Under sinusoidal \(R_s\) variation, the
error shows a small slow fluctuation because, as analyzed above, part of the
resistance error is projected onto the tangential flux when \(i_d\neq0\). These
results confirm that the proposed observer can handle both sudden and
time-varying parameter
mismatches.

\begin{figure}[!t]
    \centering
    \captionsetup[subfloat]{captionskip=0.05pt}
    \subfloat[]{
        \includegraphics[width=0.96\linewidth]{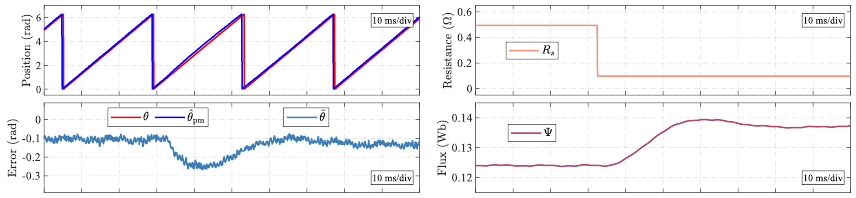}
    }\\ \vspace{-10pt}
    \subfloat[]{
        \includegraphics[width=0.96\linewidth]{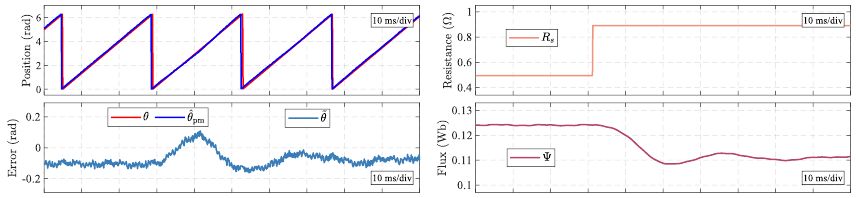}
    }
    \caption{\textcolor{black}{Experimental results of the adaptive flux
    observer under resistance mismatch at 500 rpm. (a)
    \(R_s \to 0.2R_s\). (b) \(R_s \to 1.8R_s\).}}
    \label{Rs_0.2_1.8}
\end{figure}

\begin{figure}[!t]
    \centering
    \captionsetup[subfloat]{captionskip=0.05pt}
    \subfloat[]{
        \includegraphics[width=0.96\linewidth]{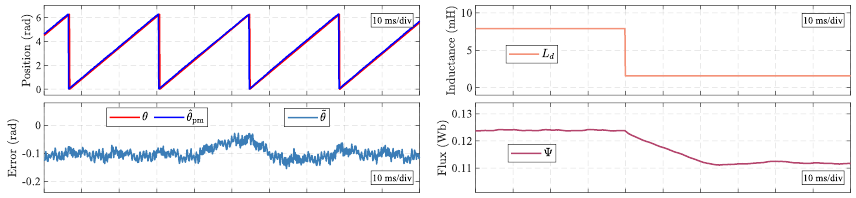}
    }\\ \vspace{-10pt}
    \subfloat[]{
        \includegraphics[width=0.96\linewidth]{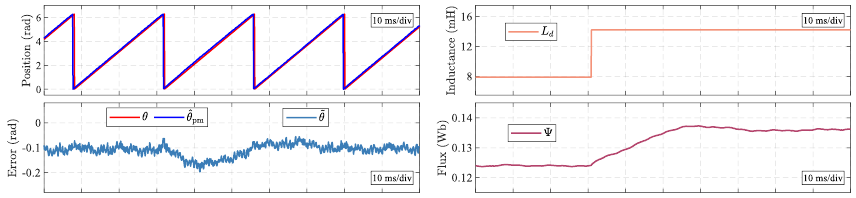}
    }
    \caption{\textcolor{black}{Experimental results of the adaptive flux
    observer under $d$-axis inductance mismatch at 500 rpm. (a)
    \(L_d \to 0.2L_d\). (b) \(L_d \to 1.8L_d\).}}
    \label{Ld_0.2_1.8}
\end{figure}

\begin{figure}[!t]
    \centering
    \captionsetup[subfloat]{captionskip=0.05pt}
    \subfloat[]{
        \includegraphics[width=0.48\linewidth]{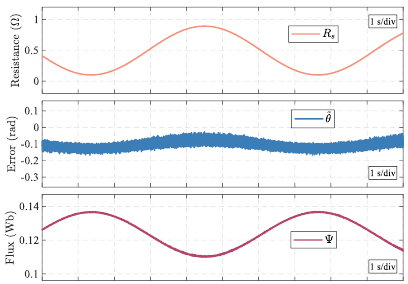}
    } \hspace{-7pt}
    \subfloat[]{
        \includegraphics[width=0.48\linewidth]{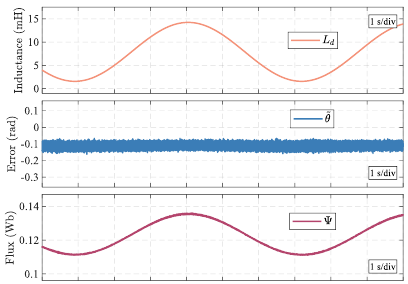}
    }
    \caption{\textcolor{black}{Experimental results of the adaptive flux
    observer under sinusoidal parameter variations at 500 rpm. (a)
    $R_s + 0.8R_s\sin(2\pi t)$. (b) $L_d + 0.8L_d\sin(2\pi t)$.}}
    \label{Rs_Ld_sine}
\end{figure}

\begin{figure}[!t]
    \centering
    \captionsetup[subfloat]{captionskip=0.05pt}
    \subfloat[]{
        \includegraphics[width=0.96\linewidth]{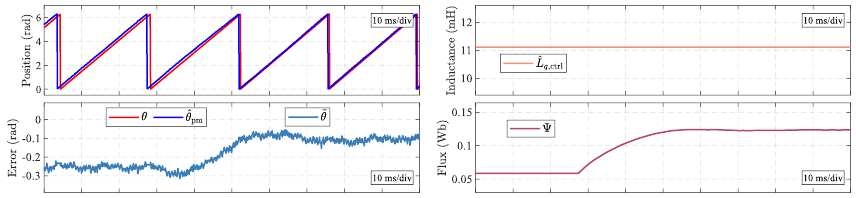}
    }\\ \vspace{-10pt}
    \subfloat[]{
        \includegraphics[width=0.96\linewidth]{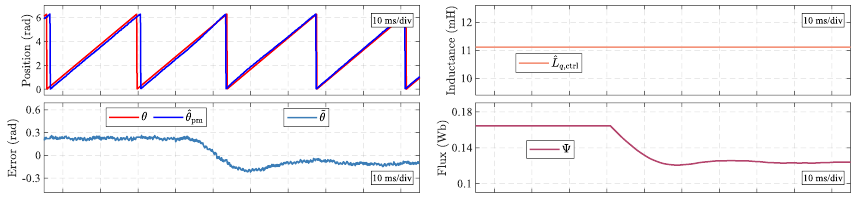}
    }
    \caption{\textcolor{black}{Experimental results of the adaptive flux
    observer with different initial flux values at 500 rpm. (a) Initial
    flux \(0.5\psi_f\). (b) Initial flux \(1.5\psi_f\).}}
    \label{flux_0.5_1.5_adaptive}
\end{figure}

\subsection{Performance of Flux and Identified Inductance Update}

Fig. \ref{flux_0.5_1.5_adaptive} shows the results with larger initial flux mismatches. When the initial values are set to \(0.5\psi_f\) and
\(1.5\psi_f\), the adaptive flux in both cases converges to approximately
\(0.124~\mathrm{Wb}\), and the position estimation error returns to a similar
steady-state range. This indicates good convergence of the adaptive update
under large initial flux mismatch.

Fig. \ref{Lq_0.8_1.2_adaptive} verifies the effectiveness of the identified
\(L_q\). After the controller parameter is switched from \(0.8L_q\) or
\(1.2L_q\) to \(\hat L_{q,\mathrm{ctrl}}\), the original estimation
bias is rapidly reduced and returns to the stable range of approximately $-$0.1 rad. Thus, the identified \(L_q\) effectively corrects the
inductance mismatch and improves the position estimation accuracy.


\begin{figure}[!t]
    \centering
    \captionsetup[subfloat]{captionskip=0.05pt}
    \subfloat[]{
        \includegraphics[width=0.96\linewidth]{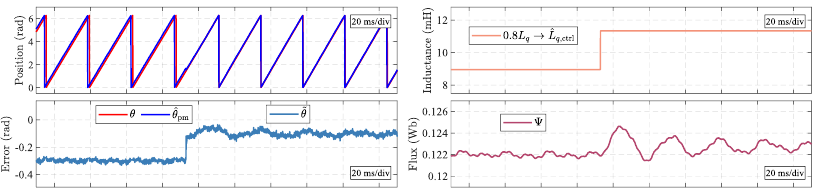}
    }\\ \vspace{-10pt}
    \subfloat[]{
        \includegraphics[width=0.96\linewidth]{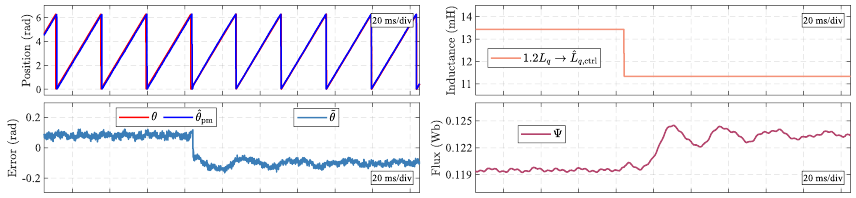}
    }
    \caption{\textcolor{black}{Experimental results when different
    \(q\)-axis inductances are switched to the identified value at 500 rpm.
    (a) $0.8 L_q \to \hat L_{q,\mathrm{ctrl}}$. (b) $1.2 L_q \to
    \hat L_{q,\mathrm{ctrl}}$.}}
    \label{Lq_0.8_1.2_adaptive}
\end{figure}

\subsection{Performance of Acceleration}

Fig. \ref{200-1000} shows the experimental results when the motor accelerates
from 200 rpm to 1000 rpm. The speed smoothly follows the acceleration command,
and the estimated position continuously tracks the measured position. The
position estimation error decreases from approximately $-$0.15 rad at low
speed and settles within a small range near 1000 r/min. Meanwhile, the
adaptive flux changes smoothly and finally converges, verifying the dynamic
tracking capability and stable sensorless operation over a wide speed range.

\begin{figure}[!t]
    \centering
    \includegraphics[width=0.8\linewidth]{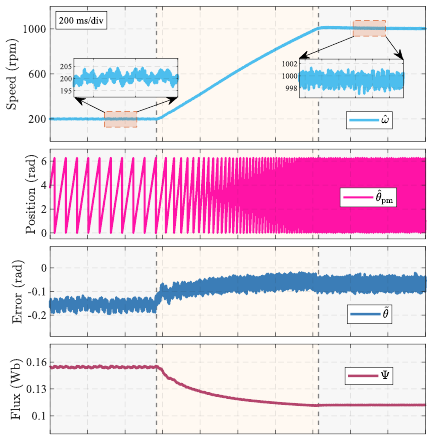}
    \vspace{-8pt}
    \caption{\textcolor{black}{Experimental results during acceleration from
    200 rpm to 1000 rpm.}}
    \label{200-1000}
\end{figure}

\section{Conclusion}

This paper has presented a parameter-robust IPMSM sensorless control method
based on an adaptive flux observer and high-frequency $q$-axis inductance
identification. By mapping parameter mismatches into the equivalent flux
domain, the adaptive flux update compensates the normal flux variations caused
by flux, $L_d$, and the dominant resistance errors, while $L_q$
identification reduces the remaining tangential sensitivity. The stability
analysis proves bounded equivalent flux magnitude tracking and clarifies the
remaining tangential residual. Compared with the benchmark methods, experiments
show that the proposed controller restores the estimation error to its nominal
level under resistance, $d$-axis inductance, $q$-axis inductance, and flux
mismatches.

\bibliographystyle{IEEEtran}
\bibliography{parameter_robust_ipmsm_refs}

\end{document}